\numberwithin{equation}{section}
\numberwithin{figure}{section}
\theoremstyle{plain}
\newtheorem{thm}{\protect\theoremname}[section]
\theoremstyle{plain}
\newtheorem{prop}[thm]{\protect\propositionname}
\theoremstyle{plain}
\newtheorem{fact}[thm]{\protect\factname}
\newcounter{casectr}
\newenvironment{caseenv}
{\begin{list}{{\itshape\ \protect\casename} \arabic{casectr}.}{%
            \setlength{\leftmargin}{\labelwidth}
                \addtolength{\leftmargin}{\parskip}
                    \setlength{\itemindent}{\listparindent}
                    \setlength{\itemsep}{\medskipamount}
                    \setlength{\topsep}{\itemsep}}
                \setcounter{casectr}{0}
            \usecounter{casectr}}
{\end{list}}
\theoremstyle{plain}
\theoremstyle{plain}
\theoremstyle{plain}
\theoremstyle{remark}
\let\oldenumerate=\enumerate
\def\enumerate{\oldenumerate%
    \setlength{\itemsep}{3pt}\setlength{\parsep}{3pt}}%
\providecommand{\algorithmname}{Algorithm}
\providecommand{\casename}{Case}
\providecommand{\claimname}{Claim}
\providecommand{\corollaryname}{Corollary}
\providecommand{\factname}{Fact}
\providecommand{\lemmaname}{Lemma}
\providecommand{\propositionname}{Proposition}
\providecommand{\theoremname}{Theorem}
\begin{document}
    \global\long\def\ang#1{\langle#1\rangle}
    \global\long\def\mod#1#2{\equiv#1\ (mod\ #2)}
    \global\long\def\poly{a_{n}x^{n}+\ldots +a_{1}x+a_{0}}
    \global\long\def\polyr{\leq_{p}}
    \global\long\def\mapr{\leq_{m}}
    \global\long\def\ol#1{\overline{#1}}
    \global\long\def\u{\cup}
    \global\long\def\U#1#2{{\displaystyle \bigcup_{#1}^{#2}}}
    \global\long\def\i{\cap}
    \global\long\def\I#1#2{{\displaystyle \bigcap_{#1}^{#2}}}
    \global\long\def\P#1#2{{\displaystyle \prod_{#1}^{#2}}}
    \global\long\def\and{\wedge}
    \global\long\def\map#1#2#3{#1:#2\rightarrow#3}
    \global\long\def\normal{\trianglelefteq}
    \global\long\def\reals{\mathbb{R}}
    \global\long\def\ints{\mathbb{Z}}
    \global\long\def\nats{\mathbb{N}}
    \global\long\def\complex{\mathbb{C}}
    \global\long\def\dy#1#2{\frac{d#1}{d#2}}
    \global\long\def\goes{\rightarrow}
    \global\long\def\bin#1{\{0,1\}^{#1}}
    \global\long\def\pofn#1{p_{#1}(n)}
    \global\long\def\rofn#1{r_{#1}(n)}
    \global\long\def\maps#1#2{:#1\rightarrow#2}
    \global\long\def\adj{\sim}
    \global\long\def\nadj{\not\sim}
    \global\long\def\floor#1{\left\lfloor #1\right\rfloor}
    \global\long\def\ceil#1{\left\lceil #1\right\rceil }
    \global\long\def\geq{\geqslant}
    \global\long\def\leq{\leqslant}
    \global\long\def\matclass{\mathcal{M}}
    \global\long\def\graphclass{\mathcal{G}}

    \renewcommand{\thefootnote}{\fnsymbol{footnote}}

    \title{Matrix Partitions of Split Graphs}
    \author{Tom\'{a}s Feder
            \footnote{268 Waverley St.,
                      Palo Alto, CA 94301, USA;
                      tomas@theory.stanford.edu},
            Pavol Hell
            \footnote{School of Computing Science,
                      Simon Fraser University,
                      Burnaby, B.C., Canada, V5A 1S6},
            Oren Shklarsky
            \footnote{School of Computing Science,
                      Simon Fraser University,
                      Burnaby, B.C., Canada, V5A 1S6}}
    \maketitle
    
    \begin{abstract} 
        Matrix partition problems generalize a number of natural
        graph partition problems, and have been studied for several standard
        graph classes. We prove that each matrix partition problem has only
        finitely many minimal obstructions for split graphs. Previously such a
        result was only known for the class of cographs. (In particular, there
        are matrix partition problems which have infinitely many minimal chordal
        obstructions.)  We provide (close) upper and lower bounds on the maximum
        size of a minimal split obstruction. This shows for the first time that
        some matrices have exponential-sized minimal obstructions of any kind
        (not necessarily split graphs). We also discuss matrix partitions for
        bipartite and co-bipartite graphs.
    
         \end{abstract}

    \section{\label{sec:Introduction}Introduction}

        The approach to graph partition problems, proposed in
        \cite{Hell2012,Feder2006Perfect,Feder2003}, and used in this paper, is
        informed by the following distinction between different partition
        problems.

        There are graph partition problems which may be solved in polynomial
        time and for which the set of minimal non-partitionable graphs is
        finite. The \emph{split graphs recognition problem} is a well-known
        example \cite{Foeldes1977}. On the other hand there are partition
        problems, such as the \emph{bipartition} problem, which may be solved in
        polynomial time \cite{KonigBipartite}, but for which the set of minimal
        non-partitionable graphs is infinite (in the case of the bipartition
        problem, these are the odd cycles). Finally, there are numerous
        $NP$-complete graph partition problems, such as the \emph{3-colouring}
        problem.

        When discussing classes of partition problems, we will use \emph{patterns}
        to describe the requirements of a partition.  In particular, the
        patterns we examine specify partition problems in which the input
        graph's vertices are to be partitioned into independent sets, or
        cliques, or some combination of independent sets and cliques.  Further,
        we might require that two parts of vertices in the partition be
        completely adjacent, or completely non-adjacent.  Formally, we use
        \emph{matrices} to describe these patterns.

        Let $M$ be a symmetric $m\times m$ matrix over ${0,1,*}$. An
        $M$\emph{-partition} of a graph $G$ is a partition of the vertices of
        $G$ into parts $P_{1},P_{2},\ldots ,P_{m}$ such that two distinct
        vertices in parts $P_{i}$ and $P_{j}$ (possibly with $i=j$) are
        adjacent if $M(i,j)=1$, and nonadjacent if $M(i,j)=0$. The entry
        $M(i,j)=*$ signifies no restriction.

        Note that when $i=j$ these restrictions mean that part $P_{i}$ is either a
        clique, or an independent set, or is unrestricted, when $M(i,i)$ is $1$, or
        $0$, or $*$, respectively. Further, some of the parts may be empty. We may
        therefore assume that non of the diagonal entries of $M$ are asterisks or
        else the problem is trivial. For a fixed matrix $M$, the $M$-partition
        problem asks whether or not an input graph $G$ admits an $M$-partition.

        If a graph $G$ fails to admit an $M$-partition, we say that $G$ is an
        $M$-\emph{obstruction.} Further, if $G$ is an $M$-obstruction but
        deleting \emph{any} vertex of $G$ results in an $M$-partitionable graph,
        then $G$ is a \emph{minimal} $M$-obstruction.

        Given a graph $G$ and lists $L(v)\subseteq\{1,\ldots ,m\}$, with $v\in
        V(G)$, the \emph{list $M$-partition} problem asks whether $G$ admits an
        $M$-partition respecting the lists. That is, an $M$-partition of $G$ such
        that, for every $v\in V(G)$, the vertex $v$ is placed in a part $P_{i}$ only
        if $i\in P_{i}$. Note that diagonal asterisks do not make the problem
        trivial when lists are involved.  In this paper, we will focus on the
        non-list version, and will explicitly refer to the list version when it is
        discussed.

        For any matrix $M$ in this paper, we assume that there are $k$ zero entries
        and $\ell$ one entries on $M$'s diagonal. By row and column permutations, we
        may further assume that $M(0,0)=M(1,1)=\ldots =M(k,k)=0$ and
        $M(k+1,k+1)=\ldots =M(k+\ell,k+\ell)=1$. Let $A$ be the submatrix of $M$ on
        rows $1,\ldots ,k$ and columns $1,\ldots ,k$; let $B$ be the submatrix of
        $M$ on rows $k+1,\ldots ,m$ and columns $k+1,\ldots ,k$; and let $C$ be the
        submatrix of $M$ on rows $1,\ldots ,k$ and columns $k+1,\ldots ,m$.  When
        $M$ has no diagonal asterisks, $k+\ell=m$, and we say that $M$ is in
        $(A,B,C)$\emph{-block} form.

        Feder et al.\ have shown that if there are asterisks in block $A$ or
        block $B$ of a matrix $M$, then there are infinitely many minimal
        $M$-obstructions \cite{Feder06matrixpartitions}. Thus, when discussing
        general graphs, we must restrict our attention to matrices in which the
        only asterisk entries (if any) are in the block $C$. Such matrices are
        called \emph{friendly}. Of these, for any $m\times m$ matrix $M$
        containing no asterisk entries at all (i.e. having only entries in
        $\{0,1\}$), it has been shown that the largest minimal $M$-obstruction
        is of size $(k+1)(\ell+1)$ \cite{Feder2008}.

        Even when restricted to chordal graphs, there are matrices for which there
        are infinitely many chordal minimal obstructions
        \cite{Feder2005,FederHellRizi}.  One of these matrices and an infinite
        family of chordal minimal obstructions to this matrix, appear frequently in
        relation to other classes of graphs in this paper, and so are listed in
        Figure~\ref{fig:M_3_1_and_obs}.  The obstruction family in this figure is
        in fact a family of interval graphs, so that the matrix has
        infinitely many interval minimal obstructions.  Nonetheless, for any matrix
        $M$, the $M$-partition problem restricted to interval graphs can be solved
        in polynomial time \cite{Valadkhan2013}.  Note that the family in
        Figure~\ref{fig:M_3_1_and_obs} is not a family of split graphs, as each
        member contains $2K_{2}$ as an induced subgraph.

        \begin{figure}
            \centering{%
                \includegraphics{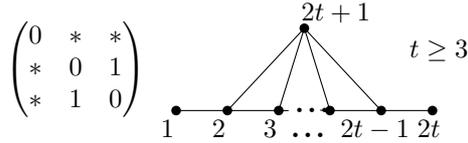}}
            \caption{\label{fig:M_3_1_and_obs} A matrix with a family of %
                                        infinitely many minimal obstructions.}
        \end{figure}

        For general matrices M, all known upper bounds on the size of minimal
        obstructions to M-partition are exponential
        \cite{Feder2008,Feder2006Cographs,Hell2012}; however, in none of these
        cases has it been shown that exponential-sized minimal obstructions to
        M-partition actually exist.

        This paper is organized as follows: In
        Section~\ref{sec:Matrix-Partitions-of-split-graphs}, we show that for
        any $m\times m$ matrix $M$, a split minimal $M$-obstruction has at most
        $O(m^2\cdot2^m)$ vertices. This implies that any $M$-partition problem
        (without lists) is solvable in polynomial time when the input is
        restricted to split-graphs.

        Section~\ref{sec:A-Special-Class} exhibits, for a particular class of
        $m\times m$ matrices, a split minimal obstruction of size
        $\Omega(2^{m})$, demonstrating that the exponential upper bound
        derived in Section~\ref{sec:Matrix-Partitions-of-split-graphs} is nearly
        tight. As noted above, this means that the class of split graph
        obstructions is the first class with finite minimal obstructions known 
        to contain exponentially large obstructions.

        In section~\ref{sec:Generalized-Split-Graphs}, we discuss graphs that admit
        other types of partitions, such as bipartite graphs and co-bipartite graphs.
        It is shown that for these classes also there are only finitely many minimal
        obstructions for any matrix $M$. These graph classes (including the class of
        split graphs) have a natural common generalization, namely graphs whose
        vertex set may be partitioned into $k$ independent sets and $\ell$ cliques,
        sometimes called $(k,\ell)$-graphs. Split graphs are $(1,1)$-graphs,
        bipartite graphs are $(2,0)$-graphs, and co-bipartite graphs are
        $(0,2)$-graphs. By contrast we show that when $k+\ell\geq3$, there is a
        matrix $M$ with infinitely many minimal $(k,\ell)$-graph obstructions. When
        $k\geq2$, there are infinitely many minimal $(k,\ell)$-graph obstructions
        that are chordal.

    \section{\label{sec:Matrix-Partitions-of-split-graphs}Matrix Partitions of
    Split Graphs}

        In this section we prove the following theorem.
        \begin{thm} \label{thm:split-graph-finitely-many-obs}
            If $M$ is a matrix with
            no diagonal asterisks, and $k\geq\ell$, then there are finitely many
            split minimal $M$-obstructions.
        \end{thm}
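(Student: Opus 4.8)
The plan is to fix a split partition $V(G)=K\cup I$ of a split minimal $M$-obstruction $G$, with $K$ a clique and $I$ an independent set, and to bound $|K|$ and $|I|$ separately in terms of $m$. Write the parts of an $M$-partition as $P_{1},\dots ,P_{k}$ (the ``independent'' parts, indexed by the zeros on $\mathrm{diag}(M)$) and $P_{k+1},\dots ,P_{m}$ (the ``clique'' parts, indexed by the ones). The basic structural fact, used throughout, is that in any $M$-partition of a split graph each independent part meets $K$ in at most one vertex and each clique part meets $I$ in at most one vertex; hence all but at most $k$ vertices of $K$ lie in clique parts, and all but at most $\ell$ vertices of $I$ lie in independent parts. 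The assumption $k\geq\ell$ is only a normalization---complementing $G$ together with $M$ interchanges split $M$-obstructions with split $\ol M$-obstructions and swaps the roles of $k$ and $\ell$---but it is convenient because it fixes which side of the later counting argument is the ``small'' one.

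The first real step is a twin reduction. Call two vertices of $I$ \emph{twins} if they have the same neighbourhood in $K$, and likewise for two vertices of $K$ with respect to $I$ (false twins inside $I$, true twins inside $K$). I claim each twin class in $I$ has at most $\ell+1$ vertices and each twin class in $K$ has at most $k+1$ vertices. To see the first, suppose $v_{0},\dots ,v_{\ell+1}\in I$ are pairwise twins; minimality gives an $M$-partition $\phi$ of $G-v_{0}$, and since at most $\ell$ of $v_{1},\dots ,v_{\ell+1}$ lie in clique parts, some $v_{j}$ lies in an independent part $P_{i}$. Then placing $v_{0}$ also into $P_{i}$ is again a valid $M$-partition: the one possible vertex of $K$ in $P_{i}$, being non-adjacent to $v_{j}$, is non-adjacent to $v_{0}$ as well, and every cross constraint at $P_{i}$ is inherited from $v_{j}$ because $v_{0}$ and $v_{j}$ have exactly the same neighbours and all of them lie in $K$---contradicting non-$M$-partitionability of $G$. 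The claim for $K$ is symmetric, using a clique part in place of an independent part. Consequently $|I|\leq(\ell+1)\,t_{I}$ and $|K|\leq(k+1)\,t_{K}$, where $t_{I}$ is the number of distinct sets $N(v)\cap K$ with $v\in I$ and $t_{K}$ the number of distinct sets $N(u)\cap I$ with $u\in K$.

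It remains to bound $t_{I}$ and $t_{K}$ by $2^{O(m)}$, and this is where I expect the difficulty to lie. The approach is to fix a vertex $x$ and an $M$-partition $\phi$ of $G-x$, and to examine how $\phi$ carves up $K$ and $I$: it splits $K$ into at most $m$ cells (at most $k$ singletons in independent parts, at most $\ell$ cells inside clique parts), and symmetrically for $I$. Two vertices lying in a common cell $P_{i}\cap K$ agree on every part $P_{j}$ with $M(i,j)\in\{0,1\}$, so their traces on $I$ can differ only over the union of the parts $P_{j}$ with $M(i,j)=*$. The obstacle is that this does not by itself collapse the count of traces: a long chain or antichain of pairwise non-twin traces need not be reducible, since substituting one $I$-vertex for another in an $M$-partition may break a $1$-entry constraint (if the new neighbourhood is too small) or a $0$-entry constraint (if it is too large). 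One therefore has to exploit the $*$-entries of $M$ together with the cell structure---roughly, track the limited collection of cells over which a trace can ``spread'' through $*$-entries, peel those off, and iterate, the hypothesis $k\geq\ell$ ensuring the recursion bottoms out on the smaller side---to get $t_{I},t_{K}=O(m\,2^{m})$. Combined with the twin bound this gives $|V(G)|=|K|+|I|\leq(k+1)t_{K}+(\ell+1)t_{I}=O(m^{2}\,2^{m})$, hence only finitely many split minimal $M$-obstructions. (As a consistency check, when $M$ has no $*$-entries at all the cells are genuine twin classes and one obtains a polynomial bound, in line with what is already known in that case.)
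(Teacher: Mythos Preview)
Your twin-class bound is correct and is essentially the paper's ``homogeneous set'' reduction in disguise (a set of false twins in $I$, or of true twins in $K$, is a homogeneous set in $G$, and conversely any homogeneous independent set in $I$ is a twin class). The genuine gap is in the step you yourself flag as the hard one: bounding the numbers $t_I,t_K$ of twin classes. Your ``track the cells over which a trace can spread through $*$-entries, peel off, and iterate'' sketch is not a proof---you never say what is being peeled or why the count drops---and in fact no such recursion is needed, because you are missing a one-line observation that dissolves the difficulty entirely.

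The observation is this: if the off-diagonal block $C$ contains \emph{any} asterisk, then $M$ has $\left(\begin{smallmatrix}0&*\\ *&1\end{smallmatrix}\right)$ as a principal submatrix, and every split graph is trivially $M$-partitionable by placing $I$ in the $0$-part and $K$ in the $1$-part. Hence a split $M$-obstruction can exist only when $C$ has no asterisks at all. Once you assume this, your own cell argument finishes immediately: two $I$-vertices lying in the same independent cell $P_i$ of $\phi$ have identical adjacency to every clique part (all the relevant $C$-entries are $0$ or $1$), so their $K$-traces can differ only on the at most $k-1$ singleton vertices of $K$ sitting in the other independent parts. That gives at most $2^{k-1}$ traces per independent cell, hence $t_I\leq k\cdot 2^{k-1}+\ell+1$; symmetrically $t_K\leq \ell\cdot 2^{\ell-1}+k+1$. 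Combined with your twin bound this yields $|V(G)|=O(m^{2}2^{m})$. The paper argues slightly differently---it extracts one large homogeneous set directly from a large part of an $M$-partition of $G-v$, rather than counting traces globally---but the two arguments are close cousins once the reduction to asterisk-free $C$ is made; that reduction is the idea you are missing.
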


        A set of vertices $H\subseteq V(G)$ is said to be \emph{homogeneous} in $G$
        if the vertices of $V(G)-H$ can be partitioned into two sets, $S_{1}$ and
        $S_{2}$ such that every vertex of $S_{1}$ is adjacent to every vertex of
        $H$, and no vertex of $S_{2}$ is adjacent to a vertex of $H$. The proof of
        Theorem \ref{thm:split-graph-finitely-many-obs} relies on the existence of
        large homogenous sets in $M$-partitionable split graphs.

        \begin{prop} \label{prop:large-hom-set-split-graph}
            Let $A$ be a $k\times k$ matrix whose diagonal entries are all zero. Let
            $G_{A}$ be a split graph that admits an $A$-partition. Then every part
            $P$ of an $A$-partition of $G_{A}$ contains a homogeneous set in $G_{A}$
            of size at least $\frac{|P|-1}{2^{k-1}}$.
        \end{prop}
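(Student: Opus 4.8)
The plan is to exploit the split structure of $G_A$ together with the fact that an $A$-partition of $G_A$ is, by the zero diagonal of $A$, a partition into $k$ independent sets. Write $V(G_A)=K\cup I$, where $K$ is a clique and $I$ an independent set. Fix a part $P=P_j$ of the $A$-partition. Since $A(j,j)=0$, the set $P$ is independent, hence meets the clique $K$ in at most one vertex; put $P'=P\cap I$, so that $|P'|\geq |P|-1$. The point of passing to $P'$ is that all non-adjacencies within $P'$, and all non-adjacencies between $P'$ and the rest of $I$, are automatic, so the only vertices of $G_A$ that can ``see'' $P'$ nontrivially are the vertices of $K$.

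The next step is to bound the number of clique vertices that can behave badly with respect to $P'$. Since every part $P_i$ of the $A$-partition is independent, it contains at most one vertex of $K$, so $|K\setminus P|\leq k-1$. Moreover the (at most one) vertex of $K$ lying in $P$ itself is non-adjacent to all of $P'$, because $P$ is independent. Thus the only vertices of $G_A$ whose neighbourhood can split $P'$ into two nonempty pieces lie in $K\setminus P$, and there are at most $k-1$ of them. Note that the actual entries of $A$ play no role beyond forcing each part to be independent: whether a given $w\in K\setminus P$ is adjacent to all of $P'$, to none of $P'$, or to some proper subset, is irrelevant to the count.

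Now take the common refinement of the bipartitions $\{\,N(w)\cap P',\ P'\setminus N(w)\,\}$ of $P'$, over all $w\in K\setminus P$. This refinement has at most $2^{k-1}$ classes, so by the pigeonhole principle some class $H$ satisfies $|H|\geq |P'|/2^{k-1}\geq (|P|-1)/2^{k-1}$. It remains to verify that $H$ is homogeneous in $G_A$. Let $S_1$ be the set of vertices of $V(G_A)-H$ adjacent to every vertex of $H$, and $S_2$ the rest; we must check that no vertex of $V(G_A)-H$ is adjacent to some but not all of $H$. A vertex of $I\setminus H$ is adjacent to no vertex of $H$; the vertex of $K\cap P$, if it exists, is likewise non-adjacent to all of $H\subseteq P'$; and every vertex $w\in K\setminus P$ is, by the choice of $H$ as a single class of the refinement, either adjacent to all of $H$ or to none of it. Since $V(G_A)-H=(I\setminus H)\cup K$, these cases are exhaustive, so $\{S_1,S_2\}$ is the required partition and $H$ is homogeneous.

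The only genuine idea is the counting in the second step — recognising that the clique has at most $k$ vertices, one per part, and that the one possibly sitting inside $P$ does no harm — which is what lets us refine into $2^{k-1}$ rather than $2^{k}$ pieces; the rest is routine checking against the definition of a homogeneous set, and I expect no real obstacle. One should merely keep an eye on the degenerate cases $|P|\leq 1$ or $P\cap K=\emptyset$, in which the stated bound still holds (with $H$ empty, or $H=P'$, respectively, being trivially homogeneous).
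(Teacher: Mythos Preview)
Your proof is correct and follows essentially the same approach as the paper: restrict to $P\cap I$, observe that at most $k-1$ clique vertices lie outside $P$, and halve with respect to each of them. The only cosmetic difference is that the paper phrases the halving iteratively (``each $u_i$ agrees with at least half'') whereas you take the common refinement and apply pigeonhole; your write-up is in fact more careful about verifying homogeneity and the bound $|K\setminus P|\le k-1$ in both cases $K\cap P=\emptyset$ and $|K\cap P|=1$.
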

        \begin{proof}
            Suppose the parts of the $A$-partition of $G_{A}$ are $P_{1},...P_{k}$.
            Let $C\u I$ be a partition of $V(G_{A})$ into a clique $C$ and
            independent set $I$. Note that for $1\leq i\leq k$, we have that
            $|P_{i}\i C|\leq1$, since each $P_{i}$ is an independent set. Now, the
            vertices in the set $P_{1}\i I$ are non-adjacent to all but at most
            $k-1$ vertices, one in each $P_{i}\i C$, for $2\leq i\leq k$ (see Figure
            \ref{fig:bipartite_split_graph}). Assume without loss of generality that
            $|P_{i}\i C|=1$ and let $u_{i}\in P_{i}\i C$, for $2\leq i\leq k$. As
            each $u_{i}$ is either adjacent to at least half of the vertices of
            $P_{1}\i I$, or non-adjacent to at least half of the vertices of
            $P_{1}\i I$, a homogeneous set of size at least
            $\frac{|P_{1}|-1}{2^{k-1}}$ can be found in $|P_{1}|$. Since this
            argument may be repeated for any other part in the partition, we have
            the desired conclusion.

            \begin{figure}[H]
            \noindent \begin{centering}
            \includegraphics{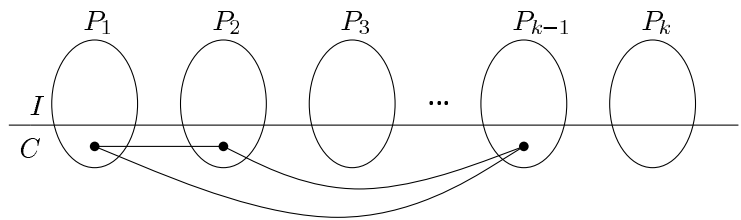}
            \par\end{centering}

            \caption{\label{fig:bipartite_split_graph}Structure of a $k$-partite split
            graph}
            \end{figure}
        \end{proof}

        \begin{prop} \label{prop:large-hom-clique-split-graph}
            Let $B$ be an $\ell\times\ell$ matrix whose diagonal entries are all 1.
            Let $G_{B}$ be a split graph that admits a $B$-partition. Then every
            part $P$ of a $B$-partition of $G_{B}$ contains a homogeneous set in
            $G_{B}$ of size at least $\frac{|P|-1}{2^{\ell-1}}$.
        \end{prop}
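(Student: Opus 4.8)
The plan is to dualize the proof of Proposition~\ref{prop:large-hom-set-split-graph}, exchanging the roles of cliques and independent sets. Let $P_{1},\ldots ,P_{\ell}$ be the parts of the $B$-partition of $G_{B}$, and fix a partition $V(G_{B})=C\u I$ into a clique $C$ and an independent set $I$. Since $B(i,i)=1$, each part $P_{i}$ is a clique, and therefore meets the independent set $I$ in at most one vertex: $|P_{i}\i I|\leq 1$. In particular $|P_{1}\i C|\geq |P_{1}|-1$, so the bulk of $P_{1}$ now lives inside the clique $C$ rather than inside $I$.

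Next I would determine which adjacencies between $P_{1}\i C$ and the rest of $G_{B}$ are forced. Every vertex of $P_{1}\i C$ lies in the clique $C$, hence is adjacent to every other vertex of $C$. A vertex $w\in P_{1}\i I$ lies in the same part $P_{1}$ as the vertices of $P_{1}\i C$, and since $B(1,1)=1$ it is adjacent to all of them. The only vertices whose adjacency to $P_{1}\i C$ is not pinned down are the vertices of $I$ lying in the other parts; there is at most one such vertex $w_{j}\in P_{j}\i I$ for each $j\in\{2,\ldots ,\ell\}$, i.e.\ at most $\ell-1$ of them. As in the previous proof, for each such $w_{j}$ I split $P_{1}\i C$ into the vertices adjacent to $w_{j}$ and those nonadjacent to $w_{j}$, retaining the larger class; after $\ell-1$ such halvings I am left with a set $H\subseteq P_{1}\i C$ with $|H|\geq\frac{|P_{1}\i C|}{2^{\ell-1}}\geq\frac{|P_{1}|-1}{2^{\ell-1}}$. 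To see that $H$ is homogeneous, put every vertex of $C\setminus H$ and every vertex of $P_{1}\i I$ into $S_{1}$ (each is adjacent to all of $H$), and place each remaining vertex $w_{j}$ into $S_{1}$ or $S_{2}$ according to whether our construction left $H$ completely adjacent or completely nonadjacent to it. Since $P_{1}$ was arbitrary, the same bound holds for every part.

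A cleaner alternative, which I would likely adopt, is to reduce directly to Proposition~\ref{prop:large-hom-set-split-graph} by passing to complements: $\ol{G_{B}}$ is again a split graph, a $B$-partition of $G_{B}$ is exactly a $\ol B$-partition of $\ol{G_{B}}$ where $\ol B$ is obtained from $B$ by interchanging $0$ and $1$ (so $\ol B$ has all-zero diagonal and is $\ell\times\ell$), and a set $H$ is homogeneous in $G_{B}$ if and only if it is homogeneous in $\ol{G_{B}}$ (the witnessing partition $V\setminus H=S_{1}\u S_{2}$ simply has $S_{1}$ and $S_{2}$ swapped). Applying Proposition~\ref{prop:large-hom-set-split-graph} to $\ol{G_{B}}$ with $A=\ol B$ yields a homogeneous set of size at least $\frac{|P|-1}{2^{\ell-1}}$ in each part.

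I do not expect a genuine obstacle here; the proof is essentially a mirror image of the previous one. The one point requiring care is the bookkeeping of forced adjacencies — in particular that $B(1,1)=1$ is what guarantees the vertex of $P_{1}\i I$ (if present) is adjacent to all of $P_{1}\i C$, so that it does not need to be processed by a halving step. This is exactly the dual of the role $A(1,1)=0$ plays in Proposition~\ref{prop:large-hom-set-split-graph}, and the complement reformulation makes it automatic.
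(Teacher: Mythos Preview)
Your proposal is correct, and the complementation argument you give at the end is exactly the paper's proof: it simply observes that $\ol{G_{B}}$ is a split graph admitting a $\ol B$-partition and invokes Proposition~\ref{prop:large-hom-set-split-graph}. Your direct dualized argument is also fine but is more than the paper bothers to write out.
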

        \begin{proof}
            The result follows from Proposition \ref{prop:large-hom-set-split-graph},
            since $G_{B}$ admits a $B$-partition if and only if $\ol{G_{B}}$
            admits a $\ol B$-partition, and the complement of a split graph is
            a split graph.
        \end{proof}

        We also require the following observation.
        \begin{fact} \label{C-has-asterisks-split-graphs}
            Let $M$ be an $(A,B,C)$-block matrix and let $G$ be a split graph. If
            $C$ has an asterisk entry, then $G$ admits an $M$-partition.
        \end{fact}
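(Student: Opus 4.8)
The plan is to exhibit an explicit $M$-partition of $G$ that uses only two of the $m$ parts. Since $G$ is a split graph, fix a partition $V(G)=K\cup S$ of its vertex set into a clique $K$ and an independent set $S$. By hypothesis the block $C$ contains an asterisk, so there are indices $i$ with $1\leq i\leq k$ and $j$ with $k+1\leq j\leq m$ for which $M(i,j)=*$. Because $M$ is in $(A,B,C)$-block form, the first $k$ diagonal entries are $0$ and the last $\ell$ are $1$, so $M(i,i)=0$ and $M(j,j)=1$.

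Next I would define the partition by putting $P_{i}=S$, $P_{j}=K$, and $P_{t}=\emptyset$ for every other index $t$, and verify that this is an $M$-partition. The part $P_{i}=S$ is independent and $M(i,i)=0$, so the diagonal constraint on $P_{i}$ is satisfied; likewise $P_{j}=K$ is a clique and $M(j,j)=1$, so the diagonal constraint on $P_{j}$ is satisfied. The only pair of distinct nonempty parts is $\{P_{i},P_{j}\}$, and since $M(i,j)=*$ there is no adjacency requirement between them, which is why the two halves of an arbitrary split partition can be used here without conflict. All remaining constraints of $M$ involve at least one empty part and are therefore vacuous. Hence $G$ admits an $M$-partition.

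There is no real obstacle in this argument; the only points requiring (minor) care are that the asterisk lies in $C$ rather than in $A$ or $B$ — so that it genuinely connects an independent-type part to a clique-type part — and that the chosen indices have diagonal entries of the correct kind ($0$ for the part holding $S$, $1$ for the part holding $K$), both of which are immediate from the block structure of $M$.
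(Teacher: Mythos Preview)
Your proof is correct and is essentially the same as the paper's: the paper observes that an asterisk in $C$ means $M$ contains $\left(\begin{smallmatrix}0 & *\\ * & 1\end{smallmatrix}\right)$ as a principal submatrix, so the split partition of $G$ can be placed into those two parts with all others empty. You have simply spelled out the indices and the verification in more detail.
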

        \begin{proof}
            If $C$ has an asterisk, then $M$ contains the matrix
            $\left(\begin{smallmatrix}0 & *\\* * & 1 \end{smallmatrix}\right)$ as a
            principal submatrix. Thus $G$ admits this partition by definition of
            split graphs, since every other part may be empty.\qedhere
        \end{proof}

        \begin{proof} [Proof of Theorem~\ref{thm:split-graph-finitely-many-obs}] Let
            $M$ be an $m\times m$ matrix, with $k$ diagonal $0$s and $\ell$ diagonal
            $1$s. Assume without loss of generality that $k\geq\ell$. We show that
            the number of vertices in a split minimal $M$-obstruction is at most

            \[
                2^{k-1}(k+\ell)(2k+3)+1\in O(k^2\cdot2^k)
            \]

            Suppose for contradiction that $G$ is a minimal $M$ obstruction with at
            least $2^{k-1}(k+\ell)(2k+3)+2$ vertices. By Fact
            \ref{C-has-asterisks-split-graphs}, we may assume that the submatrix $C$ has
            no asterisks. Pick an arbitrary vertex $v$ and consider a partition of the
            graph $G-v$ on at least $2^{k-1}(k+\ell)(2k+3)+1$ vertices. As there are
            $k+\ell$ parts in the partition, by the pigeonhole principle there is a
            part, call it $P$, of size at least $2^{k-1}(2k+3)+1$. This part $P$ is
            either an independent set or a clique, and each of these cases will be
            considered separately below. Either way, by Propositions
            \ref{prop:large-hom-set-split-graph} and
            \ref{prop:large-hom-clique-split-graph}, $P$ contains a homogeneous set in
            $A$ or $B$ (depending on whether $P$ is an independent set or a clique) of
            size at least $\frac{|P|-1}{2^{k-1}}\geq2k+3$. Since $C$ has no asterisks,
            this set is homogeneous in $G$. Thus $G-v$ has a homogeneous set of size at
            least $2k+3$, and so $G$ has a homogeneous set $H$ of size at least $k+2$,
            since by the pigeonhole principle at least $k+2$ of the vertices of $P$
            agree on $v$. Now let $w\in H$, consider a partition of $G-w$, and recall
            that $P$ is either an independent set or a clique.
            \begin{caseenv}
                \item [\bf{Case 1.}] 
                    If $P$ is an independent set, then so is $H$; hence, there are at
                    least $k+1$ independent vertices in $G-w$. As there are $\ell\leq k$
                    clique parts in the partition of $G-w$, and no two independent
                    vertices of $H$ may be placed in the same clique part, at least one
                    vertex $w'\in H-\{w\}$ must be placed in an independent part $P'$.
                    Since $w$ is not adjacent to $w'$ and both vertices belong to $H$,
                    $w$ can be added to $P'$, contradicting the minimality of $G$.
                \item [\bf{Case 2.}]
                    If $P$ is a clique then $H-w$ is a clique of size at least $k+1$,
                    and so in the partition of $G-w$, at least one vertex of $H-w$ falls
                    in a clique part $P'$. As in Case $1$, $w$ can be added to $P'$,
                    contradicting minimality.\qedhere
            \end{caseenv}
        \end{proof}

        Since every matrix $M$ has finitely many split minimal obstructions, there is an
        obvious polynomial time algorithm for the $M$-partition problem. However, a 
        more efficient algorithm is described in what follows.  
        
        A matrix $M$ is \emph{crossed} if each non-asterisk entry in its block $C$
        belongs to a row or column in $C$ of non-asterisk entries.  It has been
        shown that if $M$ is a crossed matrix, then the list $M$-partition problem
        for chordal graphs can be solved in polynomial time \cite{Feder2005}. Since
        split graphs are chordal, the same result applies for split graphs, and we
        can use this to solve the $M$-partition problem for split graphs in
        polynomial time, bearing in mind that by Fact
        \ref{C-has-asterisks-split-graphs}, we may assume that the block $C$ has no
        asterisks and so $M$ is crossed.  
        
        \begin{thm} \label{thm:algorithms} If $G$
            is a split graph and $M$ is any matrix, then the $M$-partition problem
            for $G$ can be solved in time $O(n^{k\ell})$.
        \end{thm}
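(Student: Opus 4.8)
\noindent\emph{Proof plan.} As observed just above, \cite{Feder2005} already gives \emph{some} polynomial-time algorithm, since $G$ is chordal and, by Fact~\ref{C-has-asterisks-split-graphs}, we may assume the block $C$ of $M$ has no asterisks, so $M$ is crossed; the content of the theorem is the explicit exponent $k\ell$, so the plan is to give a direct algorithm whose branching factor is $O(n^{k\ell})$. First I would reduce to the case $k\ge\ell$: since $G$ is split iff $\overline{G}$ is split, since $G$ admits an $M$-partition iff $\overline{G}$ admits an $\overline{M}$-partition, and since $\overline{M}$ has $\ell$ diagonal $0$s and $k$ diagonal $1$s, the exponent $k\ell$ is invariant under complementation, so it suffices to handle the case $k\ge\ell$ (the degenerate case $\ell=0$, in which $G$ is partitionable only if $|C|\le k$, being dispatched by $O(1)$ branching). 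Then I compute in linear time a split partition $V(G)=C\u I$ with $C$ a clique and $I$ an independent set.

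The combinatorial heart is the observation already used for Theorem~\ref{thm:split-graph-finitely-many-obs}: in any $M$-partition of $G$ each of the $k$ independent parts contains at most one vertex of $C$, and each of the $\ell$ clique parts contains at most one vertex of $I$. Hence an $M$-partition is specified by (a) which vertex of $C$, if any, lies in each independent part; (b) which vertex of $I$, if any, lies in each clique part; and (c) how the remaining vertices of $C$ are distributed among the clique parts and the remaining vertices of $I$ among the independent parts. The algorithm enumerates the choices in (a) --- at most $(n+1)^{k}$ of them --- each of which forces every remaining vertex of $C$ into a clique part; on each branch what is left is to place the remaining vertices of $C$ into clique parts and to place all of $I$, either into independent parts or as the single $I$-vertex of some clique part. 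One accepts iff some branch succeeds.

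For $\ell\ge2$ this is enough: one also enumerates the choices in (b), at a further cost of $(n+1)^{\ell}$, and since $k\ell\ge k+\ell$ whenever $k\ge\ell\ge2$ the total $(n+1)^{k+\ell}$ is $O(n^{k\ell})$; once (a) and (b) are fixed the residual task in (c) is a binary constraint problem between the leftover clique vertices and the leftover independent vertices which, using crucially that block $C$ has no asterisks, decomposes after an $O(1)$-sized guess of which parts are nonempty and is then completed greedily. For $\ell=1$ the whole budget $n^{k\ell}=n^{k}$ is consumed by (a) alone, so the residual --- distributing $C\setminus\{x_1,\dots,x_k\}$ into the unique clique part together with at most one vertex of $I$, and the rest of $I$ among the $k$ independent parts --- must be solved in polynomial time \emph{without} further $n$-fold branching; this can be done by noting that a vertex of $I$ is eligible for the clique part only if it is adjacent to all of $C\setminus\{x_1,\dots,x_k\}$ and that among such vertices the choice is immaterial up to twins. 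The step I expect to be the main obstacle is exactly this residual analysis in the small-$\ell$ regime: establishing that beyond the $k$ guesses in (a) no further $n$-fold branching is needed, because the placement of the remaining vertices is forced by their neighbourhoods up to an ambiguity controlled by the $k\ell$ entries of block $C$. Granting this, multiplying the $O(n^{k\ell})$ branches by the polynomial cost of each branch yields the claimed running time; everything else --- the split decomposition, the cardinality observations, and the polynomial solvability of each residual instance --- is routine.
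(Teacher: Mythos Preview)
Your approach is genuinely different from the paper's. The paper gives no direct algorithm: the theorem is stated immediately after the observation that, by Fact~\ref{C-has-asterisks-split-graphs}, one may assume block $C$ has no asterisks, so $M$ is crossed, and then the polynomial-time list $M$-partition algorithm for crossed matrices on chordal graphs from \cite{Feder2005} applies verbatim to split graphs. The exponent $k\ell$ is simply inherited from that algorithm (its running time depends on the number of non-$*$ entries in block $C$, which here is $k\ell$). Thus the paper's entire argument is a two-line reduction to a known black box. Your plan, by contrast, is a self-contained combinatorial algorithm exploiting the split partition directly; the structural observation that each independent part meets the clique $C$ in at most one vertex and each clique part meets $I$ in at most one vertex is exactly right, and together with $k+\ell\le k\ell$ for $k\ge\ell\ge2$ it gives the claimed exponent in the generic case. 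What your approach buys is independence from \cite{Feder2005}; what it costs is the residual analysis you yourself flag.

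On that residual: your sketch is plausible but not yet a proof. After fixing (a) and (b), the placements of the remaining $C$-vertices and the remaining $I$-vertices are \emph{not} independent problems: whether $u\in I'$ can go to $P_i$ depends on its adjacency to every $c\in C'\cap P_j$, which in turn depends on how $C'$ is distributed among the clique parts. The observation that rescues this is that any two $C'$-vertices placed in the same clique part must have identical neighbourhoods in $I'$ (otherwise any $u$ in the symmetric difference of their $I'$-neighbourhoods is excluded from every independent part, since $M(i,j)\in\{0,1\}$ forces agreement); hence in any valid solution $C'$ has at most $\ell$ distinct $I'$-neighbourhood classes, and after an $O(1)$-sized guess of how these classes are assigned to clique parts the placement of $I'$ reduces to checking unary constraints. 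This does fill the gap, but it is more than ``decomposes and is completed greedily,'' and it is precisely this kind of argument that \cite{Feder2005} packages --- so in the end your direct route and the paper's citation are doing closely related work.
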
 

        When dealing with the $M$-partition problem with lists, it is shown in
        \cite{Feder2005} that there is a matrix $M$ for which the list $M$-partition
        problem is $NP$-complete, even when restricted to chordal graphs. In fact,
        the graphs constructed in that reduction are split graphs so that this list
        $M$-partition problem remains $NP$-complete even for split graphs.

    \section{\label{sec:A-Special-Class}A Special Class of Matrices}

        As seen in Section \ref{sec:Matrix-Partitions-of-split-graphs}, for
        any $m\times m$ matrix $M$, there is an exponential upper bound
        on the size of a largest split minimal $M$-obstruction. In this section
        we show a family of matrices for which this bound is nearly tight.

        For $k,t\in\nats$, with $1\leq t\leq k-1$, let $M_{k,t}$ be a $k\times k$
        matrix with diagonal entries all zero, $t$ ones in row $k$, symmetrically,
        $t$ ones in column $k$ and asterisks everywhere else. By permuting the rows
        and columns of $M_{k,t}$ we assume without loss of generality that the one
        entries of row $k$ are in columns $k-t,...,k-1$ and symmetrically, that the
        one entries of column $k$ are in rows $k-t,...,k-1$.  See Figure
        \ref{fig:M_k_t_examples} for some examples.

        \begin{figure} \centering{\includegraphics{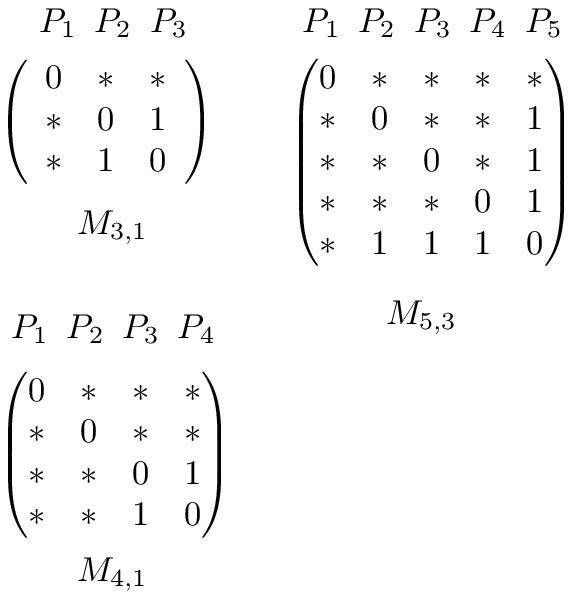}}
            \caption{\label{fig:M_k_t_examples}Matrices $M_{k,t}$ for $k\in\{3,4,5\}$
            and $t\in\{1,3\}$ }
        \end{figure}

        \begin{thm} \label{thm:exponential-lower-bound-thm}
            There exist $k,t\in\nats$ such that for the matrix $M_{k,t}$, the size
            of the largest split minimal $M$-obstruction is $\Omega(2^{k-1})$.
        \end{thm}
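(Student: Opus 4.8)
\textit{Proof proposal.}

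The plan is to build, for a suitable pair $k,t$, a single split graph that is a minimal $M_{k,t}$-obstruction and has exponentially many vertices. Take $s=\lfloor k/2\rfloor$ and $t=k-s-1=\lceil k/2\rceil-1$ (so $1\le t\le k-1$ for $k\ge 3$), and let $G$ have vertex set $C\cup I$, where $C=\{c_1,\dots,c_k\}$ is a clique, $I=\{u_X:X\subseteq C,\ |X|=s\}$ is an independent set, and $u_X$ is joined to exactly the vertices of $X$. Then $|V(G)|=k+\binom{k}{s}$, which is exponential in $k$ — of order $2^k/\mathrm{poly}(k)$, hence within a polynomial factor of the bound $O(k^2\cdot 2^{k-1})$ that the proof of Theorem~\ref{thm:split-graph-finitely-many-obs} gives for the size of split minimal $M_{k,t}$-obstructions. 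So it remains to show that $G$ is a minimal $M_{k,t}$-obstruction.

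The core of the argument is a structural criterion for $M_{k,t}$-partitionability of split graphs. Let $H$ be a split graph with a clique $C'$ of size exactly $k$ and independent set $I'$, in which no vertex of $I'$ is adjacent to all of $C'$. In any $M_{k,t}$-partition $P_1,\dots,P_k$ of $H$ each part is an independent set and so meets $C'$ in at most one vertex; since $|C'|=k$, each part meets $C'$ in exactly one vertex. Write $X$, $Y$, $\{z\}$ for the vertices of $C'$ lying in $P_1,\dots,P_{k-t-1}$, in $P_{k-t},\dots,P_{k-1}$, and in $P_k$ respectively, so $C'=X\sqcup Y\sqcup\{z\}$ with $|Y|=t$. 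Since $P_k$ must be completely adjacent to $P_{k-t},\dots,P_{k-1}$ while the vertices of $I'$ are pairwise non-adjacent, either $P_k$ contains no vertex of $I'$ or none of $P_{k-t},\dots,P_{k-1}$ does; in either case the vertices of $I'$ impose no constraints on one another, so their placement can be chosen greedily. Working this out yields the criterion: \emph{$H$ admits an $M_{k,t}$-partition if and only if there is a partition $C'=X\sqcup Y\sqcup\{z\}$ with $|Y|=t$ such that either (i) every $u\in I'$ has $X\not\subseteq N(u)$, or else $Y\subseteq N(u)$ and $z\notin N(u)$; or (ii) every $u\in I'$ has $X\not\subseteq N(u)$, or else $z\in N(u)$ and $Y\not\subseteq N(u)$.} Carrying out this case analysis carefully — in particular handling the asymmetric role of $P_k$ and the two sub-cases, where $P_k$ is a lone clique vertex or also absorbs vertices of $I'$ — is the step I expect to be the main obstacle; what follows is bookkeeping against the criterion.

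Granting the criterion, $G$ is not $M_{k,t}$-partitionable: for any partition $C=X\sqcup Y\sqcup\{z\}$ with $|Y|=t$ we have $|X|=s$, hence $u_X\in I$ and $N(u_X)=X$; then $X\subseteq N(u_X)$, $z\notin N(u_X)$, and $Y\not\subseteq N(u_X)$ (since $Y\ne\emptyset$ is disjoint from $X$), so $u_X$ violates both (i) and (ii). For minimality, first delete a vertex $u_{X_0}\in I$ and pick any admissible $(X_0,Y,\{z\})$ — one exists as $|C\setminus X_0|=t+1\ge 2$. Every surviving $u_X\in I$ has $|X|=|X_0|$, so $X_0\subseteq N(u_X)=X$ forces $X=X_0$, that is, $u_X=u_{X_0}$; thus no surviving vertex violates (i), and $G-u_{X_0}$ is $M_{k,t}$-partitionable. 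Next delete a clique vertex $c_i$; then $\omega(G-c_i)=k-1$ and no vertex of $I$ dominates $C\setminus\{c_i\}$ (each $u_X$ has $|X|\le s\le k-2$), and any such split graph is $M_{k,t}$-partitionable directly: take $P_k=\emptyset$, put the $k-1$ clique vertices one per part into $P_1,\dots,P_{k-1}$, and put each vertex of $I$ into a part whose clique vertex it misses; the requirements that $P_k$ be completely adjacent to $P_{k-t},\dots,P_{k-1}$ are then vacuous. Since every vertex of $G$ is either a clique vertex or an $I$-vertex, every single-vertex deletion yields an $M_{k,t}$-partitionable graph.

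Hence $G$ is a minimal split $M_{k,t}$-obstruction on $k+\binom{k}{\lfloor k/2\rfloor}$ vertices, which is the claimed exponential lower bound and, together with Theorem~\ref{thm:split-graph-finitely-many-obs}, shows the two bounds match up to a polynomial factor. I would also remark that $G$ shows Proposition~\ref{prop:large-hom-set-split-graph} is essentially best possible: the $\binom{k}{\lfloor k/2\rfloor}$ vertices of $I$ have pairwise distinct neighbourhoods in the $k$-clique $C$, so the only homogeneous sets of $G$ contained in $I$ are singletons, whereas $I$ has exponential size.
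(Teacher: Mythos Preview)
Your proof is correct and yields the same $\Theta(2^k/\sqrt{k})$ lower bound as the paper, but by a genuinely simpler construction. The paper takes $k=2n+1$, $t=n$, and builds its obstruction from a clique $B\cup\{a\}$ of size $k$, an independent set $B'$ of ``mates'' (each $b'\in B'$ adjacent to all of $B\cup\{a\}$ except its mate $b$), and an independent set $S$ of selectors indexed by the $n$-subsets of $B$; the role of the mates is to force the special vertex $a$ into the distinguished part $P_k$, and the argument that $a$ must land there uses the absence of induced $C_4$'s in split graphs. Your construction dispenses with both the mates and the special vertex: the clique $C$ of size $k$ together with selectors $u_X$ indexed by the $\lfloor k/2\rfloor$-subsets of $C$ already suffices. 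What makes this work is your structural criterion for $M_{k,t}$-partitionability of split graphs with a $k$-clique, which replaces the paper's ad hoc reasoning by a clean two-case dichotomy (according to whether $P_k$ or the restricted parts absorb independent-side vertices) and makes both the obstruction argument and the minimality checks essentially one line each. Your graph is therefore smaller by $2n$ vertices and the proof more systematic; the paper's version, on the other hand, is perhaps more self-contained in that it avoids stating and proving a general lemma. Your closing remark that $G$ witnesses the near-tightness of Proposition~\ref{prop:large-hom-set-split-graph} is also a nice observation not present in the paper.
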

        \begin{proof}
            We choose values of $k$ and $t$ so that the matrix $M_{k,t}$ has
            a split minimal obstruction of size at least
            \[
            \left(\pi\cdot\frac{k-1}{2}\right)^{-\frac{1}{2}}\cdot2^{k-1}+2k-1
            \]
            Choose $k=2n+1$ and $t=n$ for some $n\in\nats$, so that the matrix
            $M_{k,t}$ has $2n+1$ parts. Place ones in row $2n+1$ and columns
            $n,n+1,...2n$ as well as in columns $2n+1$ and rows $n,n+1,...,2n$.  Let
            $P$ denote the part in row and column $2n+1$, and designate the $n$
            parts that have a one to $P$ as \emph{restricted parts},
            $R_{1},...,R_{n}$ and the remaining $n$ parts as \emph{unrestricted
                parts}, $U_{1},...,U_{n}$. See Figure \ref{fig:The-matrix_M_n}.

            \begin{figure}[H]
                \begin{centering}
                    \includegraphics{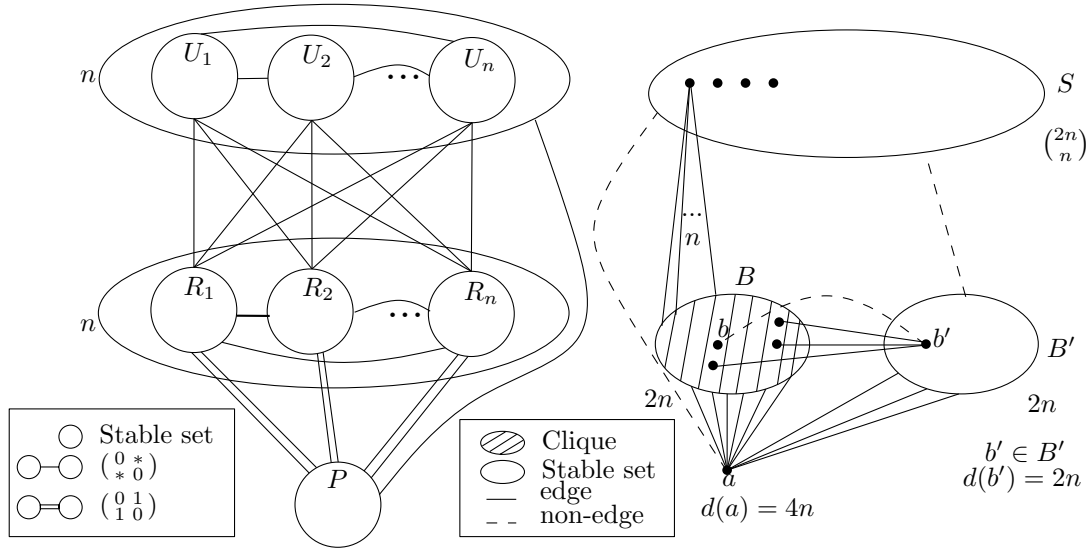}
                    \par
                \end{centering}

                \caption{\label{fig:The-matrix_M_n}The matrix $M_{2n+1,n}$ (left) and an
                obstruction $G$ (right)}
            \end{figure}

            The minimal obstruction $G$, depicted in Figure
            \ref{fig:The-matrix_M_n}, has a \emph{special vertex} $a$, and $2n$
            vertices forming a clique $B$, that are all adjacent to $a$ (so that
            $B\u\{a\}$ is a clique of size $2n+1$). Further, $G$ has another $2n$
            vertices forming an independent set $B'$ such that for each $b\in B$
            there is a $b'\in B'$ that is not adjacent to $b$ but is adjacent to every
            other vertex of $B\u\{a\}$.  Call $b$ and $b'$ \emph{mates}. Finally,
            $G$ has an independent set $S$ of size ${2n \choose n}$ such that for
            every subset $\tilde{B}$ of $B$ of size $n$, there is exactly one vertex
            $s\in S$ adjacent to exactly the vertices of $\tilde{B}$. Note that $G$
            is a split graph since $B\u\{a\}$ is a clique and $B'\u S$ is an
            independent set, as seen in Figure \ref{fig:A-split-partition-of-G}.

            \begin{figure}[H]
            \begin{centering}
            \includegraphics{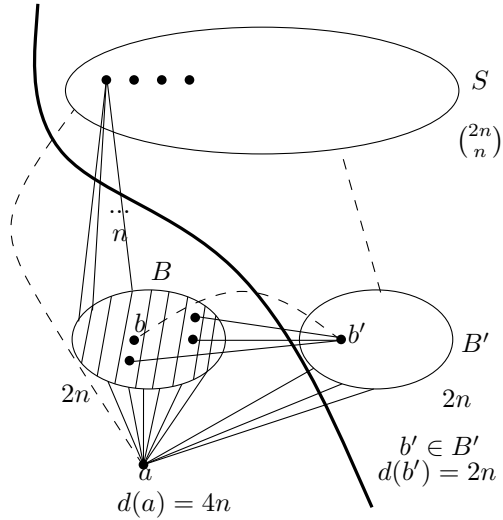}
            \par\end{centering}

            \caption{\label{fig:A-split-partition-of-G}A split partition for $G$.}
            \end{figure}

            To see that $G$ is indeed an obstruction, suppose otherwise, and note
            that $B\u\{a\}$ is a clique of size $2n+1$, so each of its vertices must
            be placed in a different part. Since each vertex of $B$ has a mate in
            $B'$ that is adjacent to $a$ and all of the other vertices in $B$, all
            parts other than the part containing $a$ have size at least two in any
            $M_{k,t}$-partition of $G$. Thus only the part containing $a$ may be a
            singleton. Further $P$ must be the only singleton part, otherwise all of
            the restricted parts must be singletons, since G contains no induced
            $C_4$. Therefore $a\in P$. Now whichever $n$ vertices of $B$ are placed
            in the unrestricted parts, as in
            Figure~\ref{fig:An-attempt-to-partition-G}, there is a vertex $s\in S$
            adjacent to exactly these vertices, and so must be placed into one of
            the restricted parts. But as $s$ is not adjacent to $a$, it cannot be
            placed in a restricted part, and $s$ can't be added to $P$; hence, $G$
            is not $M_{k,t}$-partitionable.

            \begin{figure}[H]
            \begin{centering}
            \includegraphics{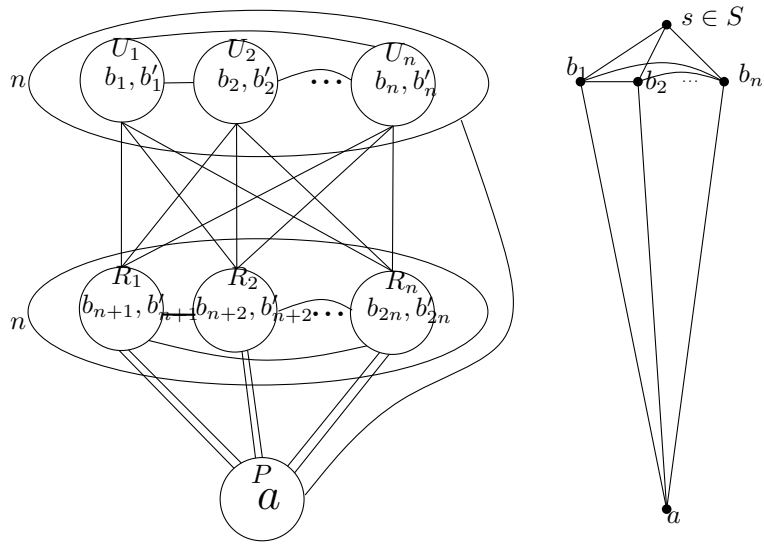}
            \par\end{centering}

            \caption{\label{fig:An-attempt-to-partition-G}An attempt to partition $G$.}
            \end{figure}

            To argue that $G$ is a minimal obstruction, we show that removing
            a vertex from one of $S,B,B',$ or $\{a\}$ allows a partition for
            the resulting graph:
            \begin{enumerate}
                \item [(i)]
                    For $s\in S$ partition $G-s$ as follows: map $a$ to $P$, place
                    each $b\in B$, together with its mate $b\in B'$, in some part,
                    taking care that neighbours of the missing $s$ are placed in
                    unrestricted parts. Now each remaining vertex of $S$ has an
                    unrestricted part to go to.
                \item [(ii)]
                    We consider $b\in B$ together with its mate $b'\in B'$. For
                    $G-b$, place $a$ in $P$, place $b$'s mate $b'$ in an
                    unrestricted part $P_{b'}$, and place all of $S$ and all of $B'$
                    in $P_{b'}$.  This is possible since $B'\u S$ is an
                    independent set. place the remaining $2n-1$ vertices of $B$ in
                    the remaining $2n-1$ parts arbitrarily. To partition $G-b'$,
                    place $b$ in $P$, map $a$ together with all of the
                    vertices of $S$ in an unrestricted part $P_{a}$, and place each
                    other pair of mates $v,v'$ from $B$ and $B'$ into a part,
                    different from $P$ and $P_{a}$.
                \item [(iii)]
                    Finally, $G-a$ can be partitioned using the restricted and
                    unrestricted parts only, not placing anything in $P$. Place each
                    $b$ and its mate $b'$ into a part. Each $s\in S$ is only
                    forbidden from $n$ out of the 2n parts and so can be placed
                    somewhere.
            \end{enumerate}
            Now $G$ has $2k-1+{k-1 \choose \frac{k-1}{2}}=4n+1+{2n \choose n}$
            vertices, and using Stirling's approximation, we get
            \[
            \frac{2^{k-1}}{\sqrt{\pi\frac{k-1}{2}}}=\frac{2^{2n}}{\sqrt{\pi
                    n}}\leq{2n \choose n}\leq\frac{2^{2n}}{\sqrt{\pi
                    n}}\left(1-\frac{c}{n}\right)=\frac{2^{k-1}}{\sqrt{\pi\frac{k-1}{2}}}\left(1-\frac{2c}{k-1}\right),\text{where \ensuremath{\frac{1}{9}<c<\frac{1}{8}}}
            \]
            Therefore $G$ is of size exponential in $k$.
        \end{proof}

    \section{\label{sec:Generalized-Split-Graphs}Generalized Split Graphs}

        Recall that split graphs can be viewed as a special case of $(k, \ell)$-graphs -
        those graphs whose vertices can be partitioned into $k$ independent sets and
        $\ell$ cliques. (Thus split graphs are the $(1, 1)$-graphs.)

        In this section, we focus on $(k, \ell)$-graphs other than the $(1,1)$-graphs.
        We begin with $(2, 0)$- and $(0,2)$-graphs, and then discuss
        other $(k,\ell)$-graphs. Recall that the $(2, 0)$-graphs are the bipartite
        graphs, while the $(0, 2)$-graphs are the co-bipartite graphs. As it turns
        out, there are finitely many bipartite or co-bipartite minimal obstructions,
        for any matrix $M$.

        \begin{thm} \label{thm:bipartite-finitely-many-obs}
            For any $m\times m$ matrix $M$, there are finitely many bipartite
            minimal obstructions and finitely many co-bipartite minimal
            obstructions.
        \end{thm}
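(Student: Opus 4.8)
The plan is to imitate the proof of Theorem~\ref{thm:split-graph-finitely-many-obs}, exploiting the fact that a bipartite graph contains no triangle, so that every clique part of any $M$-partition of it has at most two vertices; this is what tames the asterisks of block $C$ (in the split proof this role was played by the single independent set $I$, and block $C$ was assumed asterisk-free via Fact~\ref{C-has-asterisks-split-graphs}, which no longer applies here). First I would dispose of the degenerate cases. If $M$ has a diagonal asterisk, then every graph is $M$-partitionable (put all vertices in that part, leave the rest empty), so there are no obstructions; hence assume $M$ is in $(A,B,C)$-block form with $k$ diagonal zeros and $\ell$ diagonal ones. If block $A$ has an off-diagonal asterisk $A(i,j)=*$, then any bipartite $G$ is $M$-partitionable by placing one side of its bipartition in $P_i$, the other in $P_j$, and leaving the rest empty; again there are no obstructions. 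If $k=0$, an $M$-partition is just a partition into $\ell$ cliques, each of size at most $2$ in a bipartite graph, so a minimal obstruction has at most $2\ell+1$ vertices. It then remains to handle $k\geq1$ with $A$ a $\{0,1\}$-matrix (blocks $B$ and $C$ may still contain asterisks).

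In that case, let $G$ be a bipartite minimal $M$-obstruction on $n$ vertices, delete an arbitrary vertex $v$, and take an $M$-partition of $G-v$. Since each clique part is a clique in the bipartite graph $G$, the clique parts hold at most $2\ell$ vertices in total, so some diagonal-zero part $P$ has $|P|\geq(n-1-2\ell)/k$. Because $A$ has no asterisks, the neighbourhood in $G-v$ of a vertex $u\in P$ consists of $\bigcup_{j\leq k,\ A(i,j)=1}P_j$ (the same for all $u\in P$) together with some subset of the at most $2\ell$ vertices lying in clique parts; hence $P$ partitions into at most $2^{2\ell}$ classes of vertices with identical neighbourhoods, and each such class, inducing no edges, is a homogeneous set of $G-v$. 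Taking the largest class $H$ and then the larger of $H\cap N(v)$ and $H\setminus N(v)$ yields an independent homogeneous set $H'$ of $G$ with $|H'|\geq(n-1-2\ell)/(k\cdot2^{2\ell+1})$. Now the minimality argument of Theorem~\ref{thm:split-graph-finitely-many-obs} (its Case 1) applies almost verbatim: if $|H'|\geq\ell+2$, delete some $w\in H'$; in an $M$-partition of $G-w$ the $\geq\ell+1$ pairwise non-adjacent vertices of $H'-w$ cannot all be absorbed into the $\ell$ clique parts (each meeting an independent set in at most one vertex), so one of them, $w'$, lies in a diagonal-zero part $P''$, and since $H'$ is independent and homogeneous, $w$ is non-adjacent to all of $H'$ and mimics $w'$ on $V(G)\setminus H'$, so $w$ may be added to $P''$ — contradicting that $G$ is an obstruction. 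Therefore $|H'|\leq\ell+1$, which forces $n\leq k(\ell+1)2^{2\ell+1}+2\ell+1$; in particular there are finitely many bipartite minimal $M$-obstructions.

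Finally, the co-bipartite case comes for free by complementation: $G$ is co-bipartite iff $\overline{G}$ is bipartite, and $G$ is a (minimal) $M$-obstruction iff $\overline{G}$ is a (minimal) $\overline{M}$-obstruction, so the finitely many bipartite minimal $\overline{M}$-obstructions correspond bijectively to the co-bipartite minimal $M$-obstructions. The step I expect to require the most care is the last one in the second paragraph — checking that $w$ really can be reinserted into $P''$ without violating any constraint relating $P''$ to another part $P_j$ holding a fragment of $H'$, which needs both that $H'$ is homogeneous (so $w$ agrees with $w'$ off $H'$) and that $H'$ is an independent set inside a diagonal-zero part (so the $P''$–$P_j$ relation cannot be a $1$, and non-adjacency of $w$ to that fragment is consistent with a $0$ or a $*$).
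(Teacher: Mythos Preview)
Your proof is correct and follows essentially the same approach as the paper's: reduce to $A$ asterisk-free, exploit that clique parts in a bipartite graph have at most two vertices to extract a large homogeneous independent set inside some $A$-part, then apply the Case-1 reinsertion argument from Theorem~\ref{thm:split-graph-finitely-many-obs}, with the co-bipartite case following by complementation. Your bound $k(\ell+1)2^{2\ell+1}+2\ell+1$ differs only cosmetically from the paper's $2^{2\ell}(k+\ell)(2\ell+3)$, and you are somewhat more explicit about the degenerate cases and about why the reinsertion of $w$ into $P''$ respects constraints toward other parts containing vertices of $H'$, but the structure of the argument is the same.
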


        To prove Theorem~\ref{thm:bipartite-finitely-many-obs} we use an approach
        similar in nature to that used
        Section~\ref{sec:Matrix-Partitions-of-split-graphs}. Starting with bipartite
        graphs, note that we may assume that the matrix
        $\left(\begin{smallmatrix}
                    0 & *\\*
                    * & 0
                  \end{smallmatrix}\right)$
        is not a principal submatrix of the matrix $M$, or else the problem would
        be trivial.

        \begin{prop} \label{prop:Large_hom_set_bipartite}
            Let $M$ be an $(A,B,C)$-block matrix, with A of size $k\times k$ and $B$
            of size $\ell\times \ell$. Suppose the block $A$ has no asterisk
            entries. If $G$ is an $M$-partitionable bipartite graph, then any part
            $P$ of $A$ in an $M$-partition of $G$ contains a homogeneous set of size
            at least $\frac{|P|}{2^{2\ell}}$
        \end{prop}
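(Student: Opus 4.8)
The plan is to follow the proof of Proposition~\ref{prop:large-hom-set-split-graph}, using in place of the split structure the fact that every clique in a bipartite graph has at most two vertices. Write an $M$-partition of $G$ as $P_1,\ldots,P_k,Q_1,\ldots,Q_\ell$, where $P_1,\ldots,P_k$ are the parts corresponding to $A$ and $Q_1,\ldots,Q_\ell$ those corresponding to $B$, and let $P=P_i$ be the part under consideration. Since $G$ is bipartite and each $Q_j$ is a clique, $|Q_j|\leq 2$, so $W:=Q_1\cup\cdots\cup Q_\ell$ has at most $2\ell$ vertices. Because $A$ has no asterisk entries, the adjacency between $P$ and every other part $P_{i'}$ with $i'\neq i$ is forced by $M$: the part $P_{i'}$ is complete to $P$ if $M(i,i')=1$ and anticomplete to $P$ if $M(i,i')=0$. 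Hence the only vertices of $G$ whose adjacency to an arbitrary subset of $P$ is not already determined are the at most $2\ell$ vertices of $W$, and $P$ itself is an independent set.

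Next I would, for each $w\in W$, split $P$ into $N(w)\cap P$ and $P\setminus N(w)$, and take the common refinement of these at most $2\ell$ bipartitions of $P$; this partitions $P$ into at most $2^{2\ell}$ classes. By the pigeonhole principle one class $H$ satisfies $|H|\geq |P|/2^{2\ell}$, and by construction every $w\in W$ is either complete or anticomplete to $H$.

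Finally I would verify that $H$ is homogeneous in $G$. Put into $S_1$ all vertices of $W$ complete to $H$ together with every part $P_{i'}$ ($i'\neq i$) with $M(i,i')=1$, and put into $S_2$ all vertices of $W$ anticomplete to $H$, every part $P_{i'}$ with $M(i,i')=0$, and $P\setminus H$; the last set is anticomplete to $H$ because $P$ is an independent set. These $S_1,S_2$ partition $V(G)-H$ into a set complete to $H$ and a set anticomplete to $H$, so $H$ is the required homogeneous set, and the argument applies verbatim to each part $P$ of $A$.

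This is essentially a bookkeeping argument, and I do not expect a genuinely hard step; the one point to get right is the count --- that the number of ``free'' vertices is $2\ell$, not $\ell$, precisely because bipartite cliques have at most two vertices --- together with the observation that the hypothesis ``$A$ has no asterisks'' is exactly what makes the remaining parts $P_{i'}$ fall cleanly on one side of the homogeneous partition. The real content is thus the simple structural remark that an $M$-partitionable bipartite graph has at most $2\ell$ vertices in its clique parts.
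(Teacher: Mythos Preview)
Your proposal is correct and follows essentially the same approach as the paper: both use that $A$ has no asterisks to fix adjacencies between $P$ and the other independent parts, observe that the clique parts together contain at most $2\ell$ vertices because $G$ is bipartite, and then halve $P$ once per such vertex. The only cosmetic difference is that the paper phrases the last step as an iterative halving over the (at most two) vertices of each clique part, whereas you take the common refinement of the $2\ell$ bipartitions all at once; these yield the same bound.
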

        \begin{proof}
            Fix a bipartition of $G$ and let $P$ be a part of $A$ in an
            $M$-partition of $G$. We argue that $P$ has the desired size. As $A$ has
            no asterisks, the vertices of $P$ all have the same adjacency relation
            to vertices in other parts of $A$. Now let $P'$ be some part of $B$.
            Since $G$ is bipartite, $P'$ can have at most two vertices, one from
            each part of the bipartition of $G$. Let these vertices be $x$ and $y$.
            By the pigeonhole principle, $x$ is either adjacent to, or non-adjacent
            to, at least half of the vertices of $P$. Suppose with out loss of
            generality, that $x$ is adjacent to at least half of the vertices of
            $P$. Call these vertices $P_x$. Applying the pigeonhole principle again,
            this time to the vertex $y$, we have that $y$ is either adjacent to, or
            non-adjacent to, at least half of the vertices of $P_x$. Let the larger
            of these two sets be $P_{xy}$, and note that $P_{xy}\geq\frac{|P|}{2^2}$
            Now there are $\ell-1$ clique parts other than $P'$, each of size at
            most two. Inductively, we obtain a homogeneous set in $P$ of size at
            least $\frac{|P|}{2^{2\ell}}$
        \end{proof}

        Theorem~\ref{thm:bipartite-finitely-many-obs} now follows for bipartite
        graphs. The proof for co-bipartite graphs follows by complementation.

        \begin{proof} [Proof of Theorem~\ref{thm:bipartite-finitely-many-obs}]
            As discussed above, we assume that $A$ contains no asterisk entries. We
            show that any bipartite minimal obstruction is of size at most
            \[
            2^{2\ell}(k+\ell)(2\ell+3)
            \]

            Suppose otherwise, and let $G$ be a minimal obstruction with at least
            $2^{2\ell}(k+\ell)(2\ell+3) + 1$ vertices. For an arbitrary vertex $v$,
            the graph $G-v$ is $M$-partitionable, and so some part $P$ in an
            $M$-partition of $G-v$ contains at least $2^{2\ell}(2\ell+3)$ vertices.
            Since $2^{2\ell}(2\ell+3)\geq 3$ for $l\geq 0$, and no clique part of
            $M$ may contain more than two vertices, $P$ must be an independent set.
            Thus by Proposition~\ref{prop:Large_hom_set_bipartite}, $P$ contains a
            homogeneous set of size at least $\frac{|P|}{2^{2\ell}}\geq 2\ell + 3 $.
            By the pigeonhole principle, $G$ has an homogeneous set $H$ of size at
            least $\ell+2$. Note that $H$ is an independent set. Let $h\in H$, and
            consider a partition of $G-h$. As there are only $\ell$ cliques and
            $\ell+1$ vertices in $H-{h}$, there must be a part $P'$ of $A$ that
            contains a vertex $h'$ of $H-{h}$. But since $H$ is an independent set,
            and $h$ has the same neighbourhood as $h'$, we may add $h$ to $P'$ to
            obtain a partition of $G$, a contradiction.
        \end{proof}

        We now consider $(k,\ell)$-graphs for values of $k$ and $\ell$ that satisfy
        $k+\ell\geq3$. For convenience, let $(k,\ell)$ denote the set of
        $(k,\ell)$-graphs. The family of graphs depicted in
        Figure~\ref{fig:M_3_1_and_obs} is an infinite family of chordal minimal
        obstructions to the matrix $M_{3,1} $\cite{FederHellRizi}. We define the
        family more precisely as follows.

        For $t\geq 3$, let $G(t)$ be the graph consisting of an even path on $2t$
        vertices, and an additional vertex $u$. $u$ is adjacent to all vertices of
        the path, except the endpoints. Note that each $G(t)$ is chordal.

        \begin{thm} \label{thm:generalized-split-inifinitely-many-obs}
            If $k,\ell\in\nats$ such that $k+\ell\geq3$, then there exists a matrix
            $M$ that has infinitely many $(k,\ell)$-minimal obstructions.
        \end{thm}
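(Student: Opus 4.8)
The plan is to exhibit, for each pair $(k,\ell)$ with $k+\ell\geq 3$, a single matrix $M$ together with an explicit infinite family of minimal $M$-obstructions that happen to be $(k,\ell)$-graphs. The natural starting point is the matrix $M_{3,1}$ and the family $G(t)$ introduced just above: by \cite{FederHellRizi} these $G(t)$ are an infinite family of (chordal) minimal $M_{3,1}$-obstructions. Since $M_{3,1}$ is a $3\times 3$ matrix with all diagonal entries $0$, and each $G(t)$ consists of a path plus a dominating-ish vertex $u$, each $G(t)$ is itself a $(3,0)$-graph (indeed it is chordal, hence even a $(k',\ell')$-graph for suitable small parameters). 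So the case $k\geq 3$, $\ell=0$ is essentially already done, and more generally whenever the target class $(k,\ell)$ contains all the $G(t)$ we are finished. The work is to handle the remaining pairs — in particular $(2,1)$, $(1,2)$, $(0,3)$, and then bootstrap to all larger $k+\ell$ — by padding the matrix and the obstructions appropriately.

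First I would treat $k\geq 3$ directly: take $M$ to be $M_{3,1}$ bordered by $k-3$ additional isolated independent-set parts (diagonal $0$, all off-diagonal entries to the new rows/columns set to $*$) and $\ell$ additional isolated clique parts (diagonal $1$, all new off-diagonal entries $*$). Adding $*$-rows and $*$-columns does not destroy the property of being a minimal obstruction for the family $G(t)$: a graph is $M$-partitionable iff it is $M_{3,1}$-partitionable (the extra parts can be left empty), so the $G(t)$ remain minimal obstructions, and each $G(t)$ is trivially a $(k,\ell)$-graph for any $k\geq 3,\ell\geq 0$ since it is already a $(3,0)$-graph. For the cases with $k\leq 2$ the same padding trick gives the matrix, but $G(t)$ itself may fail to be a $(k,\ell)$-graph (a long path is not co-bipartite, nor a $(2,1)$-graph in general), so here I would instead build a new family: apply a complementation/local-complementation transformation to $M_{3,1}$ and to the $G(t)$ to move the "path" structure into the clique side. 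Concretely, for $(0,3)$ take $M=\overline{M_{3,1}}$ (a $3\times 3$ matrix with diagonal $1$s) and the complements $\overline{G(t)}$; by the standard complementation equivalence ($G$ is $N$-partitionable iff $\overline G$ is $\overline N$-partitionable, and minimality is preserved) these are infinitely many minimal $\overline{M_{3,1}}$-obstructions, and $\overline{G(t)}$ is co-bipartite because $G(t)$ is bipartite, hence a $(0,2)$- and a fortiori a $(0,3)$-graph. For the genuinely mixed small cases $(2,1)$ and $(1,2)$ I would split the $G(t)$ construction so that part of the gadget sits on the clique side: e.g. take the even path's two color classes as the two independent sets and let the dominating vertex plus a modified neighborhood live in the clique, choosing $M$ to be a $3\times 3$ matrix with one $1$ and two $0$s on the diagonal and with a carefully chosen off-diagonal $1/*$ pattern reproducing the $M_{3,1}$ constraints, then verify directly (as in \cite{FederHellRizi}) that the modified family is still an infinite antichain of minimal obstructions.

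The main obstacle is the small mixed cases: for $k\geq 3$ or $\ell=0$ or the pure $(0,\text{large})$ case, everything reduces cleanly to the known $M_{3,1}$ family via $*$-padding and complementation, but for $(2,1)$ and $(1,2)$ one must actually re-examine the obstruction proof — showing both that each modified $G(t)$ is non-partitionable (the dominating vertex $u$ still has no legal part once the path constraints are met) and that deleting any vertex restores partitionability (the parity argument on the even path). I expect this to go through by mirroring the $M_{3,1}$ analysis, but it is the step that genuinely needs to be carried out rather than cited. Once all three base cases $(2,1),(1,2),(0,3)$ (together with the already-handled $k\geq 3$) are in hand, the extension to every $(k,\ell)$ with $k+\ell\geq 3$ is immediate by adding further $*$-padded parts, since every $(k,\ell)$-graph class with $k+\ell\geq 3$ contains one of these base classes.
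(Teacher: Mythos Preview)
Your proposal has a genuine error and also misses the observation that makes the paper's argument almost immediate. The padding step is wrong: you claim that bordering $M_{3,1}$ with extra all-$*$ rows and columns gives a matrix $M$ for which ``a graph is $M$-partitionable iff it is $M_{3,1}$-partitionable,'' but only one direction holds. If you add even one extra part (diagonal $0$ or $1$, off-diagonal $*$), then for any vertex $v$ of $G(t)$ you may place $v$ alone in the new part and $M_{3,1}$-partition $G(t)-v$; this is possible precisely because $G(t)$ is a \emph{minimal} $M_{3,1}$-obstruction. Hence every $G(t)$ becomes $M$-partitionable and the infinite family is destroyed. In any event the padding is unnecessary: the theorem asks only for \emph{some} matrix $M$, with no requirement tying the size or diagonal pattern of $M$ to the pair $(k,\ell)$.

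The second issue is that you misidentify the difficult cases. You assert that a long path is not a $(2,1)$-graph, but every path is bipartite and hence already a $(2,0)$-graph. More to the point, each $G(t)$ is itself a $(2,1)$-graph: the even path supplies two independent sets and the extra vertex $u$ forms a one-vertex clique. This is exactly what the paper exploits. Since $G(t)\in(3,0)\cap(2,1)$, the family $\{G(t):t\geq 3\}$ of minimal $M_{3,1}$-obstructions already lies in every class $(k,\ell)$ with $k\geq 2$; by complementation $\overline{G(t)}\in(0,3)\cap(1,2)$, and these are minimal $\overline{M_{3,1}}$-obstructions covering every remaining pair (where $k\leq 1$ forces $\ell\geq 2$). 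So $M_{3,1}$ and $\overline{M_{3,1}}$, unmodified, handle all cases, and no new construction for $(2,1)$ or $(1,2)$ is needed.
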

        \begin{proof}
            Note that for any $t\geq 3$, $G(t)$ is 3-colourable, and $G(t)$ is
            partitionable into a bipartite graph and a clique. That is, $G(t)\in (3,
            0)\i(2,1)$. Therefore, for the matrix $M_{3,1}$, there are infinitely many
            (chordal) minimal $(2,1)\i(3,0)$ obstructions. By complementation, for any
            $t\geq3$, the graph $\overline{G(t)}$ is in $(1,2)\i(0,3)$, providing
            infinitely many (chordal) $(1,2)\i(0,3)$ obstructions for the matrix
            $\overline{M_{3,1}}$.

            Now if $k\leq 1$, then since $k+\ell\geq 3$, it must be that
            $\ell\geq2$, and so the family $\{\overline{G(t)} | t\geq 3\}$ is a family of
            $(k,\ell)$-minimal obstructions for $\overline{M_{3,1}}$. On the other
            hand, if $k\geq2$, then the family $\{G(t)|t\geq 3\}$ is a family of
            $(k,\ell)$-minimal (chordal) obstructions for the matrix $M_{3,1}$.
        \end{proof}

\bibliographystyle{plain}
\bibliography{refs}
\end{document}